\documentclass[runningheads]{llncs}

\usepackage[T1]{fontenc}
\usepackage{graphicx}
\usepackage{amsmath}
\usepackage{amssymb}

\usepackage{amsthm}
\usepackage{algorithm}
\usepackage{algpseudocode}
\usepackage{comment}
\usepackage{booktabs}
\usepackage{multirow}
\usepackage{array}
\usepackage{color}
\usepackage[hidelinks]{hyperref}
\usepackage{tikz}
\usetikzlibrary{positioning,arrows.meta,shapes.geometric,fit,backgrounds}
\usepackage{pgfplots}
\pgfplotsset{compat=1.17}
\usepackage{subcaption}
\usepackage{url}

\begin{document}

\title{Beyond Best Response: Quantal Stackelberg Deception as Insurance Against Attacker Misspecification\thanks{DISTRIBUTION A: Approved for Public Release, Distribution is Unlimited}}
\titlerunning{Beyond Best Response}

\author{%
Asif Rahman\inst{1} \and
Md. Abu Sayed\inst{1} \and
Ahmed Ann Noor Ryen\inst{1} \and
Ahmed Hemida\inst{2} \and
Charles A. Kamhoua\inst{3} \and
Christopher Kiekintveld\inst{1}%
}

\authorrunning{A. Rahman et al.}

\institute{%
University of Texas at El Paso, El Paso, TX 79968, USA\\
\email{\{arahman3,msayed,aryen\}@miners.utep.edu, cdkiekintveld@utep.edu}
\and
Yokogawa Corporation of America, Sugar Land, TX, USA\\
\email{ahmed.hemida@yokogawa.com}
\and
DEVCOM Army Research Laboratory, MD 20783, USA\\
\email{charles.a.kamhoua.civ@army.mil}%
}

\maketitle

\begin{abstract}
Stackelberg Security Games (SSG) assume that an attacker observes the defender's strategy and chooses the target that maximizes their expected utility perfectly. In most realistic applications this is not plausible, and in the case of cyber deception (e.g., using decoys) the purpose of the game is to induce uncertainty and mistakes. Quantal response is a common way to represent noise and mistakes in decision-making; here it replaces perfect best-response with a logit choice with rationality parameter $\lambda$ and results in a generalized Quantal Stackelberg Equilibrium (QSE), which recovers the classical solution exactly as $\lambda \rightarrow \infty$. We conduct a deeper analysis of how QSE can function as a generalized form of insurance against a variety of forms of model specification error/uncertainty; our analysis shows that QSE provides a practical way to address the important role of tie-breaking rules and model uncertainty in SSG from both a theoretical and practical perspective. We conduct an empirical evaluation in a cybersecurity case study with two networks and real vulnerabilities drawn from CVE and scored using the Common Vulnerability Scoring System (CVSS). QSE beats Stackelberg in realized defender utility spanning 144 scenarios with specification errors and 25 parameter configurations, with gains of 46\% to 175\% showing a substantial advantage in a wide variety of realistic cases. 
\keywords{Cyber-physical deception \and Quantal Stackelberg equilibrium \and Bounded rationality \and Security games \and Honeypots}
\end{abstract}

\section{Introduction}

Cyber deception uses false information (e.g., decoys or "honeypots") to manipulate the observations and beliefs of cyber attackers. Game-theoretic approaches for optimizing cyber deception~\cite{kiekintveld2015game} are designed to induce uncertainty and mistakes in attackers, often relying on them to mistake something fake for something real. For the most part, these models build on the standard framework of Stackelberg Security Games (SSG)~\cite{tambe2011security,sinha2018stackelberg} which assumes that defenders will commit to optimal strategies and attackers will best-respond, breaking ties in favor of the defender. While this is a common approach, there are two major issues: (1) there is a foundational mismatch between assuming an attacker with perfect knowledge and rationality, when the point of deception is to induce uncertainty and mistakes, and (2) in reality it is nearly impossible for players to share an exact game model (perfectly known payoffs, signals, rationality, etc.), and equilibrium solutions are not robust to these errors. The importance of both model uncertainty and tie-breaking rules has been recognized since early studies of SSG~\cite{kiekintveld2011approx,pita2012robust,kiekintveld2013inteveral}, but many models still default to perfect rationality assumptions. 

We combine this early work on uncertainty in SSG with work drawing on models of human behavior that model mistakes using the framework of quantal choice~\cite{mcfadden1976}, which assumes that choices will follow a logistic response function biased towards higher expected values but also allowing for mistakes. In the context of SSG, this replaces the standard best-respond function for the attacker with a logistic one with a single rationality parameter $\lambda$ that captures the magnitude of mistakes. When the defender optimizes against this attacker we have a Quantal Stackelberg Equilibrium (QSE). When $\lambda \to \infty$ this model converges to the standard SSE solution, while when $\lambda \to 0$ it converges to a purely random attacker. While QR-based approaches have been studied in the context of human attackers~\cite{yang2012computing}, here we extend this to consider the value of this model as a more general form of "insurance" against uncertainty, model specification errors, and bounded rationality. 


The mismatch between assumptions of model and player perfection is particularly extreme in ambitious cyber deception models. For example, in coordinated multi-domain deception~\cite{sayed2026coordinated} we pair cyber honeypots with physical decoys so the two layers corroborate each other; this requires multiple layers of assumptions about the effectiveness and impact of deception. We use this as the basis for a case study using a realistic game model drawing on real networks, vulnerabilities, and values from open sources. We then investigate how QSE compares with standard SSE, focusing specifically on the robustness features of the model varying forms of uncertainty. We expected the boundedly rational model to win by spreading deception across more targets. It mostly does not. For moderate rationality the optimal strategies are nearly identical for QSE and SSE, but QSE is still much more robust.  
The advantage hides somewhere more subtle. Strong Stackelberg reasoning assumes an indifferent attacker breaks ties in the defender's favor, and a real network is full of ties. For SSE the three Log4Shell replicas and the two Ripple20 gateways sit at exactly equal attacker utility. The quantal attacker splits them instead of resolving the tie in the defender's favor, and that split costs the Stackelberg defender about half a utility point against every adversary we simulated. The same hedge survives an incorrect attacker model, an incorrect $\lambda$, Gaussian noise, or a worst-case tie-breaking rule. The QSE strategy provides a broad insurance against misspecification errors in realistic games, with very little premium. 

We build upon this insight with both a theoretical examination of the properties of QSE and a large empirical examination in a realistic cyber deception case study. We make three main contributions: 
\begingroup
\renewcommand\thefootnote{}%
\footnotetext{Code availability: \url{https://github.com/ucchol/qse-cyber-deception}}
\endgroup

\begin{itemize}
\item[\textbullet] A QSE formulation of coordinated cyber-physical deception that recovers the classical Stackelberg equilibrium exactly as $\lambda \to \infty$, with existence and recovery proof and an efficient gradient-based solution method. 

\item[\textbullet] Evidence on two networks with real vulnerabilities and Common Vulnerabilities and Exposures (CVE) identifiers, scored under CVSS v2 and v3, that QSE beats Stackelberg in realized utility across all cases for 144 mis-estimation scenarios and 25 parameter configurations. The QSE advantage is robust and ranges from 46\% to 175\%.

\item[\textbullet] A formal analysis in a booked versus realized theorem that traces the advantage to tie-breaking rather than coverage. Wherever a network holds interchangeable assets, perfect-rationality analysis overstates its security and any slightly random attacker collects the difference.
\end{itemize}

\section{Related Work and Research Questions}
\label{sec:related}

Stackelberg games became the standard model of defender-attacker interaction after deployments such as ARMOR at Los Angeles International Airport~\cite{pita2008armor}, with the defender committing first to a randomized allocation and the attacker observing it and playing a best-response. Conitzer and Sandholm~\cite{conitzer2006computing} established the computational foundations, Tambe~\cite{tambe2011security} wrote the canonical account, and Sinha et al.~\cite{sinha2018stackelberg} provide a long-term survey. The strong Stackelberg equilibrium (SSE) is a convention in much of the research, and assumes the follower breaks ties in the leader's favor, usually defended by noting that the leader can perturb its strategy infinitesimally and make the preferred response strictly optimal~\cite{vonstengel2010leadership,conitzer2016stackelberg}. That defense assumes a follower who best-responds, even as the variations disappear and many responses become equivalent. For cyber deception, the framework has been used for zero-day defense~\cite{sayed2022cyber}, and Asghar et al.~\cite{asghar2024scalable} built a double-oracle algorithm for scalable multi-domain deception. The coordinated cyber-physical model we extend is that of Sayed et al.~\cite{sayed2026coordinated}.

The perfect-rationality assumption has known limits. McKelvey and Palfrey~\cite{mckelvey1995qre} introduced the Quantal Response Equilibrium, in which agents choose actions with probability proportional to a logit function of expected utility. Yang et al.~\cite{yang2012computing} brought quantal response to security games with the PASAQ algorithm behind PROTECT, its first field use~\cite{shieh2012protect}. Nguyen et al.~\cite{nguyen2013analyzing} validated the approach in human-subject experiments. \v{C}ern\'{y} et al.~\cite{cerny2021qse} computed QSE in extensive-form games, proving NP-hardness and giving MILP approximations. A different approach avoids behavioral models altogether. Pita et al.~\cite{pita2012robust} proposed MATCH, which hedges against attacker deviations through robust optimization rather than modeling them, and can beat quantal methods when the behavioral parameters are miscalibrated~\cite{nguyen2013analyzing}. The distinction matters for us because MATCH bounds deviations where QSE prices them, and the deviations we exploit have structure worth pricing. 

Game-theoretic work on cyber-physical security has grown quickly. Guo et al.~\cite{guo2018smart} modeled smart grid security as a continuous Markov game without deception, and Pawlick, Colbert, and Zhu~\cite{pawlick2019taxonomy} gave a taxonomy of defensive deception spanning honeypots, moving target defense, and obfuscation. On honeypots specifically, P\'{i}bil et al.~\cite{pibil2012honeypot} cast placement as a network security game and Rahman et al.~\cite{rahman2025choice} used choice overload from behavioral economics as the deception itself. Chen and Li~\cite{chen2024} investigate logit-style tie-breaking rules in cases with multiple defenders. A gap remains, since cyber-physical deception models assume a perfectly rational attacker, and while quantal response has been successful in physical security applications~\cite{yang2012computing,shieh2012protect} it has not been evaluated in cyber-physical deception, where inducing attacker error is the whole point. We frame three specific research questions to advance the literature: 

\begin{description}
\item[RQ1.] Does optimizing against a bounded-rational attacker yield higher realized defender utility than against a perfectly rational one?
\item[RQ2.] Is that advantage robust to mis-estimating the attacker's rationality, and does it persist across game parameter configurations?
\item[RQ3.] Does the robustness extend beyond an incorrect rationality level to different incorrect models of the attacker's decision rule?
\end{description}

\section{Game Model and Solution Concepts}
\label{sec:method}

We consider a two-player interaction in a cyber-physical setting where the defender uses deception to protect the network. The players know the structure of the network and the vulnerabilities present (which are drawn from National Vulnerability Database (NVD) and scored using the Common Vulnerability Scoring System (CVSS)). The costs of decoys are also known, so the base game model is common knowledge (this changes when we consider potential error later). The defender does not directly observe the attacker's rationality, but has a prior belief over $\lambda$. The defender commits publicly to a randomized strategy, so the attacker sees the distribution of decoys but not the realization. The attacker cannot easily distinguish between real and decoy objects. The attacker will generally respond using a quantal response function rather than best response. We model a single interaction and the game is not zero-sum. 

\subsection{Game Model}

The network has $n$ nodes, partitioned into a cyber domain $\mathcal{C}$ and a physical domain $\mathcal{P}$. Vulnerabilities associated with resources in the network all have a Common Vulnerabilities and Exposures (CVE) identifier. The defender chooses a mixed strategy $\mathbf{A}_d \in \Delta^{n-1}$, assigning probability $A_d(i)$ to deploying deception on node $i$, and the attacker chooses a single target $j \in \{1, \ldots, n\}$ to attack.

When the defender places a honeypot on node $i$ and the attacker targets node $j$, the payoffs follow the CVSS-based structure of~\cite{sayed2026coordinated}. Let $V_d(i) \in [-10, -1]$ denote the defender's valuation of node $i$ (more negative means more valuable), and $V_a(i) \in [1, 10]$ the attacker's valuation. Let $c_d(i)$ be the defender's cost of deploying deception on node $i$, and $c_a(j)$ the attacker's cost of targeting node $j$. The reward matrices are
\begin{equation}
R_d(i, j) = \begin{cases}
\text{cap} \cdot V_a(i) - c_d(i) + c_a(j) & \text{if } i = j \text{ (capture)} \\
\text{esc} \cdot V_d(j) - c_d(i) + c_a(j) & \text{if } i \neq j \text{ (escape)}
\end{cases}
\label{eq:Rd}
\end{equation}
\begin{equation}
R_a(i, j) = \begin{cases}
-\text{cap} \cdot V_a(j) + c_d(i) - c_a(j) & \text{if } i = j \\
\;\;\,\text{esc} \cdot V_a(j) + c_d(i) - c_a(j) & \text{if } i \neq j
\end{cases}
\label{eq:Ra}
\end{equation}
where cap and esc are multipliers, following \cite{sayed2026coordinated}. The capture entry rewards the defender in proportion to $V_a$, the attacker's valuation of the trapped node deliberately, since a capture is worth more when the attacker is pursuing a bigger value. Since $i = j$ on the diagonal the same quantity can be written $V_a(i)$ or $V_a(j)$. Each player's expenditure enters the two matrices with opposite signs, the defender pays $c_d(i)$ and the attacker gains it (conversely for $c_a(j)$). Forcing the adversary to spend resources is a benefit for the defender, and for deception much of the defender's return is often increased attacker effort so the cost terms are transfers and cancel in the sum $R_d + R_a$. What makes the game non-zero-sum is the valuation structure, since the attacker's escape payoff follows its own $V_a(j)$ while the defender's escape loss follows $V_d(j)$.

The defender chooses a commitment to maximize the expected reward given the attacker's response, and the attacker scores each target by its expected utility under the commitment,
\begin{equation}
\max_{\mathbf{A}_d \in \Delta^{n-1}} \; \sum_{i,j} A_d(i)\, \sigma_a(j \mid \mathbf{A}_d)\, R_d(i,j), \qquad\quad U^a_j(\mathbf{A}_d) = \sum_i A_d(i)\, R_a(i, j).
\label{eq:objectives}
\end{equation}
A perfectly rational attacker selects $\arg\max_j U^a_j$, which yields the classical Stackelberg model, and Section~\ref{sec:qse} replaces that exact maximization with a quantal response $\sigma_a$.
Given a CVE with CVSS scores (exploit probability $\Pr$, impact score $\text{IS}$, base score $\text{BS}$, exploitability score $\text{ES}$), the raw valuations are
\begin{equation}
\tilde{V}_d(i) = -\Pr(i) \cdot \text{IS}(i) + (1 - \Pr(i)) \cdot r_i, \quad \tilde{V}_a(i) = \Pr(i) \cdot \text{BS}(i) - (1 - \Pr(i)) \cdot \text{ES}(i)
\label{eq:valuations}
\end{equation}
where $r_i = 5$ for cyber and $8$ for physical nodes to reflect the higher consequence of physical compromise; raw values are normalized to $[-10, -1]$ and $[1, 10]$.

We use the logit quantal response because the closed-form softmax admits analytical gradients and because it is standard in the security games literature~\cite{yang2012computing,shieh2012protect,nguyen2013analyzing}. The independence of irrelevant alternatives assumptions means that attackers evaluate targets independently, which nested logit~\cite{cerny2021qse} relaxes at computational cost. Section~\ref{sec:discussion} revisits this tradeoff.

\subsection{Quantal Stackelberg Equilibrium}
\label{sec:qse}

Given the commitment $\mathbf{A}_d$, the quantal response selects a target $j$ with a probabililty increasing in the expected utility $U^a_j$: 
\begin{equation}
\sigma_a(j \mid \mathbf{A}_d; \lambda) = \frac{\exp(\lambda \cdot U^a_j)}{\sum_k \exp(\lambda \cdot U^a_k)}
\label{eq:softmax}
\end{equation}
The rationality parameter $\lambda \geq 0$ controls how concentrated this distribution is, from uniform at $\lambda = 0$ to the best response $\arg\max_j U^a_j$ as $\lambda \to \infty$.
The defender's expected utility is given by: 
\begin{equation}
U_d(\mathbf{A}_d; \lambda) = \sum_{i,j} A_d(i) \, \sigma_a(j \mid \mathbf{A}_d; \lambda) \, R_d(i, j) = \mathbf{A}_d^\top R_d \, \boldsymbol{\sigma}_a
\label{eq:defender_utility}
\end{equation}

\begin{definition}[Strong Stackelberg Equilibrium]
\label{def:sse}
In SSE the attacker observes $\mathbf{A}_d$ and plays a best response, breaking ties in favor of the defender. A Strong Stackelberg Equilibrium (SSE) is a commitment $\mathbf{A}_d^{\text{SSE}}$ that is optimal against this attacker response~\cite{conitzer2006computing,vonstengel2010leadership}. We write $V_{\text{SSE}}$ for the defender's utility in SSE. The SSE is unique, and the time breaking rule ensures existence. 
\end{definition}

\begin{definition}[Quantal Stackelberg Equilibrium]
For a given $\lambda$ a mixed strategy $\mathbf{A}_d^* \in \Delta^{n-1}$ is a Quantal Stackelberg Equilibrium if it maximizes $U_d(\mathbf{A}_d; \lambda)$ when the attacker responds via (\ref{eq:softmax}). We write $V_{\text{QSE}}(\lambda)$ for the optimal value.
\end{definition}

\begin{proposition}[Existence]
\label{prop:exist}
For every $\lambda \geq 0$ a QSE exists without a tie-breaking rule. 
\end{proposition}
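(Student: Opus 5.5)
The plan is a direct Weierstrass extreme value argument. The feasible set $\Delta^{n-1}$ is nonempty and compact: it is a closed, bounded subset of $\mathbb{R}^n$. It therefore suffices to show that $U_d(\cdot;\lambda)$ from (\ref{eq:defender_utility}) is continuous on $\Delta^{n-1}$ for every fixed $\lambda \ge 0$. A maximizer $\mathbf{A}_d^*$ then exists, and by definition it is a QSE with value $V_{\text{QSE}}(\lambda)$.

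Continuity follows from composing elementary maps. First, each attacker utility $U^a_j(\mathbf{A}_d)=\sum_i A_d(i)R_a(i,j)$ is linear in $\mathbf{A}_d$, since the entries of $R_a$ in (\ref{eq:Ra}) are finite constants. Second, the logit map (\ref{eq:softmax}) is continuous in $(U^a_1,\dots,U^a_n)$, and in fact smooth. Its numerator $\exp(\lambda U^a_j)$ is continuous, and its denominator $\sum_k \exp(\lambda U^a_k)$ is a sum of strictly positive terms, so it never vanishes; this is the only point that needs explicit care. At $\lambda=0$ the map is the constant uniform distribution, which is trivially continuous. Third, $U_d=\mathbf{A}_d^\top R_d\boldsymbol{\sigma}_a$ is a bilinear form in two continuous vector-valued functions of $\mathbf{A}_d$, so it is continuous.

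The substantive part is explaining why no tie-breaking rule is needed, since this is what distinguishes the result from Definition~\ref{def:sse}. Under best response, the attacker's response correspondence $\arg\max_j U^a_j$ is only upper hemicontinuous. The defender's payoff can then jump at points where attacker utilities tie, so the supremum may fail to be attained unless ties are resolved in the defender's favor. For finite $\lambda$, the response $\boldsymbol{\sigma}_a$ is a single-valued continuous function, so every tie produces a well-defined split rather than an ambiguity. The supremum is therefore attained without any convention. I expect no real obstacle here. The only things to state carefully are that the denominator is positive and that the case $\lambda=0$ is covered. I would also note that the maximizer need not be unique, because $U_d$ is generally nonconvex in $\mathbf{A}_d$; the proposition claims existence only.
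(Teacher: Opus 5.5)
Your proof is correct and matches the paper's argument. The single-valued, continuous logit response makes $U_d(\cdot\,;\lambda)$ continuous on the compact simplex, so a maximizer exists by Weierstrass, and single-valuedness removes the tie-selection problem; you spell out the continuity details (positive denominator, the $\lambda=0$ case) that the paper leaves implicit.
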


\begin{proof}
The quantal response (\ref{eq:softmax}) is single-valued and continuous in $\mathbf{A}_d$, so $U_d(\cdot\,; \lambda)$ is continuous on the compact simplex and attains its maximum, and uniqueness of the response removes the selection problem that forces the SSE convention in Definition~\ref{def:sse}.
\end{proof}

\begin{theorem}[Stackelberg Recovery, general form]
\label{thm:recovery}
Let $G$ be any finite two-player normal-form game with bounded payoffs in which a leader commits to a mixed strategy and a follower responds by the logit rule (\ref{eq:softmax}). Then
\begin{enumerate}
\item[(i)] $\limsup_{\lambda \to \infty} V_{\text{QSE}}(\lambda) \leq V_{\text{SSE}}$, and
\item[(ii)] if for every $\varepsilon > 0$ there exists a commitment with a unique follower best response whose value to the leader is at least $V_{\text{SSE}} - \varepsilon$, then $\lim_{\lambda \to \infty} V_{\text{QSE}}(\lambda) = V_{\text{SSE}}$.
\end{enumerate}
\end{theorem}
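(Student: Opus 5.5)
The plan is to prove the two parts separately. Part (i) is an upper bound that holds uniformly over commitments. Part (ii) is a lower bound obtained by evaluating the quantal objective at one well-chosen commitment.

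\textbf{Notation.} Throughout, $M$ is a bound on $|R_d|$ and on $|R_a|$, and $U^a_j(\mathbf{x})$ is the follower's expected payoff from action $j$ against commitment $\mathbf{x}$, as in (\ref{eq:objectives}). We write $u_d(\mathbf{x},j)=\sum_i x(i)R_d(i,j)$ for the leader's expected payoff when the follower plays $j$. For $\delta\ge 0$ let
\[
\mathrm{BR}_\delta(\mathbf{x})=\{j : U^a_j(\mathbf{x})\ge \max_k U^a_k(\mathbf{x})-\delta\}.
\]
We use $V_{\text{SSE}}=\sup_{\mathbf{x}}\max_{j\in \mathrm{BR}_0(\mathbf{x})}u_d(\mathbf{x},j)$, which is Definition~\ref{def:sse} with ties broken in the leader's favour.

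\textbf{Tail estimate.} The basic fact is a bound on the logit rule (\ref{eq:softmax}). Every $j\notin\mathrm{BR}_\delta(\mathbf{x})$ has probability at most $e^{-\lambda\delta}$, because its weight is at most $e^{-\lambda\delta}$ times the weight of a maximizer. So the total mass outside $\mathrm{BR}_\delta(\mathbf{x})$ is at most $n e^{-\lambda\delta}$. This bound is uniform in $\mathbf{x}$.

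\textbf{Part (i).} The tail estimate gives, for every $\mathbf{x}$,
\[
U_d(\mathbf{x};\lambda)\le W(\delta)+2Mn e^{-\lambda\delta},\qquad W(\delta):=\sup_{\mathbf{x}}\max_{j\in\mathrm{BR}_\delta(\mathbf{x})}u_d(\mathbf{x},j).
\]

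The key step, and the main obstacle, is to show $\lim_{\delta\downarrow 0}W(\delta)\le V_{\text{SSE}}$. This is an upper-semicontinuity statement that has to hold uniformly over the simplex, including near points where the best-response set changes.

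I would argue by compactness. Take $\delta_m\downarrow 0$ and near-maximizers $(\mathbf{x}_m,j_m)$ with $u_d(\mathbf{x}_m,j_m)\ge W(\delta_m)-1/m$. Since there are finitely many actions, pass to a subsequence along which $j_m=j$ is constant. Since the simplex is compact, also $\mathbf{x}_m\to\mathbf{x}^*$.

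Continuity of $U^a$ in $\mathbf{x}$ gives $U^a_j(\mathbf{x}^*)\ge\max_k U^a_k(\mathbf{x}^*)$, so $j\in\mathrm{BR}_0(\mathbf{x}^*)$. Continuity of $u_d$ then gives $\lim W(\delta_m)=\lim u_d(\mathbf{x}_m,j)=u_d(\mathbf{x}^*,j)\le V_{\text{SSE}}$. The limit exists because $W$ is monotone in $\delta$.

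Finally choose $\delta=\lambda^{-1/2}$. Then $V_{\text{QSE}}(\lambda)\le W(\lambda^{-1/2})+2Mne^{-\sqrt\lambda}$, and taking $\limsup_{\lambda\to\infty}$ yields (i).

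\textbf{Part (ii).} Fix $\varepsilon>0$. The hypothesis gives a commitment $\mathbf{x}_\varepsilon$ with a unique best response $j^*$ and $u_d(\mathbf{x}_\varepsilon,j^*)\ge V_{\text{SSE}}-\varepsilon$. Because the best response is unique and there are finitely many actions, the gap
\[
g=U^a_{j^*}(\mathbf{x}_\varepsilon)-\max_{k\ne j^*}U^a_k(\mathbf{x}_\varepsilon)
\]
is strictly positive. By the tail estimate with $\delta=g$, the logit rule puts mass at least $1-ne^{-\lambda g}$ on $j^*$.

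Hence
\[
V_{\text{QSE}}(\lambda)\ge U_d(\mathbf{x}_\varepsilon;\lambda)\ge u_d(\mathbf{x}_\varepsilon,j^*)-2Mne^{-\lambda g}.
\]
It follows that $\liminf_{\lambda\to\infty}V_{\text{QSE}}(\lambda)\ge V_{\text{SSE}}-\varepsilon$. Letting $\varepsilon\to 0$ and combining with (i) gives the limit.

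Existence of the maximizers used above is Proposition~\ref{prop:exist}. For (i), though, working with suprema is enough. The hypothesis in (ii) is exactly what is needed: if it fails, the SSE value is reached only through ties, which the logit follower splits, and the inequality in (i) can be strict.
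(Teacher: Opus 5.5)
Your proof is correct, and part (ii) is essentially the paper's argument: a fixed commitment with a unique best response, a positive gap, exponential concentration, then $\varepsilon \to 0$. There is one cosmetic slip in (ii). $\mathrm{BR}_g(\mathbf{x}_\varepsilon)$ still contains the runner-up actions, which sit exactly $g$ below $j^*$, so the tail estimate as you stated it does not isolate $j^*$. Apply the weight-ratio bound directly to each $k \neq j^*$, giving mass at most $(n-1)e^{-\lambda g}$ off $j^*$, or take $\delta = g/2$.

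Part (i) follows a genuinely different route from the paper. The paper runs compactness on the QSE optimizers themselves. It takes optimal commitments $\mathbf{x}_m$ at $\lambda_m \to \infty$, extracts a limit $\mathbf{x}^\dagger$, and lets $T$ be the best-response set at $\mathbf{x}^\dagger$. Continuity then shows that the logit mass off $T$ vanishes along the sequence, so the limiting value is a convex combination of leader payoffs over $T$ at $\mathbf{x}^\dagger$.

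You instead factor the argument into two pieces. The first is a uniform bound $U_d(\mathbf{x};\lambda) \le W(\delta) + 2Mne^{-\lambda\delta}$, which is the only place the logit form enters. The second is a deterministic upper-semicontinuity lemma $\lim_{\delta \downarrow 0} W(\delta) \le V_{\text{SSE}}$ for the $\delta$-relaxed Stackelberg value, and that is where the compactness extraction lives. Your coupling $\delta = \lambda^{-1/2}$ is harmless but unnecessary: fix $\delta$, take $\limsup_\lambda$ to get $W(\delta)$, then let $\delta \downarrow 0$.

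Your route buys three things:
\begin{itemize}
\item It never needs optimal quantal commitments, so Proposition~\ref{prop:exist} is not used.
\item It gives an explicit inequality at every finite $\lambda$, so any modulus for $W$ near $0$ converts directly into a convergence rate. The paper's sequential argument gives no rate.
\item $W(\delta)$ reads naturally as the defender's value against a satisficer over $\mathcal{B}_\delta$ (Section~\ref{sec:offmodel}) who resolves ties in the defender's favor.
\end{itemize}
The paper's route is shorter. More to the point for this paper, it exhibits the tied set $T$ at the limiting commitment and the convex-combination structure over it. That is exactly the mechanism Proposition~\ref{prop:booked} then sharpens into the booked-versus-realized gap.
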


\begin{proof}
For (i), take $\lambda_m \to \infty$ and optimal commitments $\mathbf{x}_m$ with $U_d(\mathbf{x}_m; \lambda_m) \to \limsup_\lambda V_{\text{QSE}}(\lambda)$, and by compactness pass to a subsequence with $\mathbf{x}_m \to \mathbf{x}^\dagger$. Let $T$ be the follower's best responses to $\mathbf{x}^\dagger$. By continuity there is $\eta > 0$ such that for large $m$ every target outside $T$ sits at least $\eta/2$ below the follower maximum at $\mathbf{x}_m$, so its logit probability is at most $e^{-\lambda_m \eta/2}$ and vanishes. Every limit point of the response is supported on $T$, so the limit of $U_d(\mathbf{x}_m; \lambda_m)$ is a convex combination of defender payoffs over $T$ at $\mathbf{x}^\dagger$, at most the defender-favorable selection, at most $V_{\text{SSE}}$.
For (ii), fix $\varepsilon > 0$ and a commitment $\mathbf{x}_\varepsilon$ whose unique best response $j$ earns the leader at least $V_{\text{SSE}} - \varepsilon$. Uniqueness gives a gap $\eta_\varepsilon > 0$ to every other target, so under (\ref{eq:softmax}) the mass of $j$ is at most $(n-1) e^{-\lambda \eta_\varepsilon}$ and $U_d(\mathbf{x}_\varepsilon; \lambda)$ converges to the value of $\mathbf{x}_\varepsilon$ under response $j$. Hence $\liminf_\lambda V_{\text{QSE}}(\lambda) \geq V_{\text{SSE}} - \varepsilon$ for every $\varepsilon$, and (i) gives the limit.
\end{proof}

The hypothesis of (ii) is the standard inducibility property of leadership games; it holds for generic payoffs~\cite{vonstengel2010leadership} and Section~\ref{sec:results} verifies the convergence numerically in our games. The theorem is a statement about the general class of games since nothing in the proof uses the CVSS structure, the costs, or the two domains (only finiteness, bounded payoffs, and the logit form). QSE is therefore a strict generalization of the classical SSE. Two cautions frame how to read the limit. First, it concerns the optimal value, and the optimal quantal commitment at large $\lambda$ need not resemble the SSE commitment, a distinction Proposition~\ref{prop:booked} below turns into the central result of the paper. Second, the convergence is not monotone, as Figure~\ref{fig:lambda_sweep} later shows. At very low $\lambda$ the optimal value exceeds the SSE benchmark because a nearly random attacker is a weak opponent. It bottoms out near $\lambda = 0.5$, and only then does it climb to the Stackelberg limit, so the most damaging adversary in this family is a moderately confused one rather than a flawless one.

\subsection{Properties of the Objective}

Three further facts about the objective shape what follows, two for the solver and one for the results.

\begin{proposition}[Smoothness]
For any fixed $\lambda > 0$ the defender utility $U_d(\mathbf{A}_d; \lambda)$ is a continuously differentiable function of $\mathbf{A}_d$ on the interior of the simplex, with directional derivatives into the simplex at the boundary.
\end{proposition}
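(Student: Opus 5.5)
The plan is to exhibit $U_d(\cdot\,;\lambda)$ as a composition of maps that are smooth on an open neighbourhood of the simplex in $\mathbb{R}^n$. Smoothness on the interior and one-sided directional derivatives at the boundary then follow together. The map $\mathbf{A}_d \mapsto \mathbf{U}^a(\mathbf{A}_d) = R_a^\top \mathbf{A}_d$ in (\ref{eq:objectives}) is linear, so it is $C^\infty$ on all of $\mathbb{R}^n$. The softmax map $\mathbf{u} \mapsto \boldsymbol{\sigma}(\mathbf{u})$ in (\ref{eq:softmax}) is a ratio of exponentials whose denominator $\sum_k \exp(\lambda u_k)$ is strictly positive everywhere, so it is $C^\infty$ on $\mathbb{R}^n$ for every fixed $\lambda$ (the hypothesis $\lambda>0$ is not even needed, though at $\lambda = 0$ the response is constant). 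Hence $\mathbf{A}_d \mapsto \boldsymbol{\sigma}_a(\mathbf{A}_d;\lambda)$ is $C^\infty$ on $\mathbb{R}^n$. Finally, (\ref{eq:defender_utility}) writes $U_d = \mathbf{A}_d^\top R_d\, \boldsymbol{\sigma}_a$, a bilinear expression in two $C^\infty$ functions, so $U_d$ extends to a $C^\infty$ function $\tilde U_d$ on $\mathbb{R}^n$.

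Next I will record the gradient explicitly, since the solver needs it. Write $J = \lambda(\mathrm{diag}(\boldsymbol{\sigma}) - \boldsymbol{\sigma}\boldsymbol{\sigma}^\top)$ for the Jacobian of the softmax. The chain rule gives
\begin{equation*}
\nabla \tilde U_d(\mathbf{A}_d) = R_d\,\boldsymbol{\sigma}_a + R_a\, J^\top R_d^\top \mathbf{A}_d ,
\end{equation*}
and every entry of this expression is a continuous function of $\mathbf{A}_d$.

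The final step handles the simplex geometry. The simplex $\Delta^{n-1}$ has empty interior in $\mathbb{R}^n$, so "interior" must be read relative to the affine hull $\{\mathbf{x} : \mathbf{1}^\top \mathbf{x} = 1\}$. The restriction of a $C^1$ function on $\mathbb{R}^n$ to this affine subspace is $C^1$ with respect to the intrinsic coordinates, which gives the first claim. At a boundary point $\mathbf{x}$, take any feasible direction $\mathbf{d}$, meaning $\mathbf{x} + t\mathbf{d} \in \Delta^{n-1}$ for small $t>0$. Because $\tilde U_d$ is differentiable at $\mathbf{x}$ in $\mathbb{R}^n$, the one-sided derivative
\begin{equation*}
\lim_{t\downarrow 0} \frac{U_d(\mathbf{x}+t\mathbf{d}) - U_d(\mathbf{x})}{t}
\end{equation*}
exists and equals $\nabla \tilde U_d(\mathbf{x})^\top \mathbf{d}$. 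This is the second claim.

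There is no substantive obstacle. The only point needing care is this relative-interior and boundary bookkeeping: the simplex is not open in $\mathbb{R}^n$, and the extension $\tilde U_d$ is what lets us avoid any argument specific to the boundary. I will state that extension explicitly rather than differentiate on the simplex directly.
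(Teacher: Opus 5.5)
Your proposal is correct and follows the same route as the paper's sketch, which writes $U_d$ as the smooth softmax composed with the linear map $\mathbf{A}_d \mapsto \mathbf{A}_d^\top R_a$, multiplied by terms linear in $\mathbf{A}_d$. Your explicit $C^\infty$ extension to $\mathbb{R}^n$ and your relative-interior bookkeeping supply the boundary directional-derivative claim that the sketch leaves implicit, and your gradient agrees with the paper's formula (\ref{eq:gradient}) once the softmax Jacobian is expanded.
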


\begin{proof}[Sketch]
The map is a composition of the linear map $\mathbf{A}_d \mapsto \mathbf{A}_d^\top R_a$ with the smooth softmax, multiplied by terms linear in $\mathbf{A}_d$. Compositions and products of smooth maps are smooth.
\end{proof}

Smoothness is what enables a gradient-based solver, whereas the Stackelberg objective jumps discontinuously when the attacker's argmax switches, which is why standard formulations typically use integer programming.

\begin{remark}[Non-convexity]
$U_d(\mathbf{A}_d; \lambda)$ is not concave in general. The softmax introduces sign-indefinite curvature, so we treat the problem as potentially multi-modal and use multi-start optimization. In practice the landscape on our games proved benign. Across all experiments, 49 or 50 of 50 restarts converged, and repeated solves agreed on the optimum to within $10^{-2}$ utility. The isolated exceptions arise at moderate and large $\lambda$, where a single restart terminates cleanly by the optimizer's own criterion at a local optimum a few hundredths below the best, from an unremarkable interior draw, and the multi-start budget, costing under a second at $n = 16$, absorbs the effect.
\end{remark}

\begin{proposition}[Tie Splitting]
\label{prop:ties}
Fix any defender strategy $\mathbf{A}_d$ and any two targets $j, k$ with $|U^a_j - U^a_k| \leq \delta$. Under the quantal response (\ref{eq:softmax}),
\[
\frac{\sigma_a(j)}{\sigma_a(k)} = \exp\!\big(\lambda (U^a_j - U^a_k)\big) \in \big[e^{-\lambda\delta},\, e^{\lambda\delta}\big].
\]
\end{proposition}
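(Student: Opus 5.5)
The argument is a direct computation from the logit form (\ref{eq:softmax}), followed by a monotonicity step. Fix the commitment $\mathbf{A}_d$. Its attacker utilities $U^a_1,\dots,U^a_n$ are then fixed real numbers, finite because the payoffs in (\ref{eq:Ra}) are bounded. Write $Z = \sum_k \exp(\lambda U^a_k)$ for the common normalizer. $Z$ is a finite sum of strictly positive terms, so $Z > 0$, and every $\sigma_a(k) = \exp(\lambda U^a_k)/Z$ is strictly positive. In particular the ratio in the statement is well defined, and no tie-breaking convention is needed, in line with Proposition~\ref{prop:exist}.

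The first step is to form the ratio. The normalizer cancels, giving
\[
\frac{\sigma_a(j)}{\sigma_a(k)} = \frac{\exp(\lambda U^a_j)}{\exp(\lambda U^a_k)} = \exp\!\big(\lambda (U^a_j - U^a_k)\big),
\]
which is the claimed identity. The point worth stating is that the ratio does not depend on any target other than $j$ and $k$. This is the independence-of-irrelevant-alternatives property mentioned in Section~\ref{sec:method}, and it is why a bound on a single pairwise gap is enough.

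The second step is the interval bound. The hypothesis $|U^a_j - U^a_k| \le \delta$ and $\lambda \ge 0$ give $-\lambda\delta \le \lambda(U^a_j - U^a_k) \le \lambda\delta$. Since $\exp$ is strictly increasing, applying it to these inequalities yields membership in $[e^{-\lambda\delta}, e^{\lambda\delta}]$.

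There is no real obstacle here. The only care needed is to record the sign condition $\lambda \ge 0$ when multiplying the inequality, and to note that the statement also covers $\lambda = 0$, where the interval collapses to $\{1\}$. I will also add the interpretation used later. When $\delta = 0$, i.e.\ an exact tie such as the Log4Shell replicas, the ratio equals $1$ for every finite $\lambda$. The quantal attacker therefore splits tied targets evenly, regardless of $\lambda$, rather than resolving them in the defender's favor.
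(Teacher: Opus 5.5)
Your proof is correct and takes the same approach as the paper, which simply notes that the normalizer in (\ref{eq:softmax}) cancels in the ratio. Your extra step, deriving the interval from $\lambda \ge 0$ and the monotonicity of $\exp$, is the part the paper calls ``immediate,'' made explicit.
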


\begin{proof}
Immediate from (\ref{eq:softmax}), since the normalizer cancels in the ratio.
\end{proof}

\begin{lemma}[Exact ties split exactly]
\label{lem:split}
Fix $\mathbf{A}_d$ and let $T$ be a set of targets whose attacker utilities $U^a_j$ are equal for all $j \in T$. Under (\ref{eq:softmax}) the attacker's probability is identical across $T$ at every $\lambda \geq 0$.
\end{lemma}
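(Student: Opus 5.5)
The plan is to read the claim straight off the logit formula (\ref{eq:softmax}). We can either specialize Proposition~\ref{prop:ties} or compute directly. Fix $\mathbf{A}_d$ and let $u$ denote the common attacker utility $U^a_j = u$ for all $j \in T$. The normalizer $Z(\lambda) = \sum_k \exp(\lambda U^a_k)$ depends only on $\mathbf{A}_d$ and $\lambda$, not on the index $j$. So for every $j \in T$ we get $\sigma_a(j \mid \mathbf{A}_d;\lambda) = e^{\lambda u}/Z(\lambda)$, which is the same number for all members of $T$.

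Equivalently, we can invoke Proposition~\ref{prop:ties} with $\delta = 0$ for any pair $j,k \in T$. The interval $[e^{-\lambda\delta}, e^{\lambda\delta}]$ collapses to $\{1\}$, so $\sigma_a(j)/\sigma_a(k) = 1$. The ratio is well defined because every softmax probability is strictly positive: $Z(\lambda) > 0$ and each numerator is positive for finite $\lambda$.

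The only point needing care is the range ``every $\lambda \geq 0$''. At $\lambda = 0$ the formula gives the uniform distribution $1/n$, which is trivially identical across $T$, and the direct computation above already covers this case. There is no real obstacle. As a by-product, I would note that the common value equals $\sigma_a(T)/|T|$, where $\sigma_a(T)$ is the total mass on $T$. This is the form that will be used when contrasting the quantal split with the defender-favorable SSE selection.
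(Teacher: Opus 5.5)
Your proposal is correct and matches the paper's proof, which simply notes that equal utilities give equal numerators $\exp(\lambda U^a_j)$ over the common normalizer in (\ref{eq:softmax}). The alternative via Proposition~\ref{prop:ties} with $\delta = 0$ and the separate remark on $\lambda = 0$ are valid but unnecessary, since the direct computation already covers every $\lambda \geq 0$.
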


\begin{proof}
Equal utilities give equal numerators in (\ref{eq:softmax}).
\end{proof}

\begin{proposition}[Booked versus realized value on a tie boundary]
\label{prop:booked}
Let $\mathbf{x}^*$ be an SSE commitment and $T$ the set of attacker best responses to it, with defender payoffs $D_t = \sum_i x^*_i R_d(i,t)$ for $t \in T$. The SSE convention books the value $\max_{t \in T} D_t$ for $\mathbf{x}^*$, while the realized value of the same commitment against the quantal attacker satisfies
\[
\lim_{\lambda \to \infty} U_d(\mathbf{x}^*; \lambda) \;=\; \frac{1}{|T|} \sum_{t \in T} D_t .
\]
The convention therefore overstates the commitment's value by
\[
\Delta(\mathbf{x}^*) \;=\; \max_{t \in T} D_t \;-\; \frac{1}{|T|} \sum_{t \in T} D_t \;\geq\; 0,
\]
with equality only when the tied targets pay the defender identically.
\end{proposition}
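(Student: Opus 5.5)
The plan is to evaluate the quantal objective at the \emph{fixed} commitment $\mathbf{x}^*$ and let $\lambda \to \infty$, reusing the mechanism from part (i) of Theorem~\ref{thm:recovery} with Lemma~\ref{lem:split} supplying the exact split. Because $\mathbf{x}^*$ does not depend on $\lambda$, no compactness or subsequence argument is needed; the attacker utilities $U^a_j(\mathbf{x}^*)$ are fixed numbers.

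First, write $U_d(\mathbf{x}^*;\lambda) = \sum_j \sigma_a(j \mid \mathbf{x}^*;\lambda)\, D_j$, where $D_j = \sum_i x^*_i R_d(i,j)$ is defined for every target $j$, not only those in $T$. This is just a regrouping of (\ref{eq:defender_utility}).

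Next, let $u^* = \max_j U^a_j(\mathbf{x}^*)$. Since $T$ is finite and nonempty, there is a gap $\eta = u^* - \max_{j \notin T} U^a_j > 0$; if $T$ is everything the claim is immediate. Dividing numerator and denominator of (\ref{eq:softmax}) by $e^{\lambda u^*}$, each target in $T$ contributes $1$ to the normalizer. For $j \notin T$,
\[
\sigma_a(j) \le \frac{e^{-\lambda\eta}}{|T|} \to 0 .
\]
By Lemma~\ref{lem:split}, every $t \in T$ has the same probability,
\[
\sigma_a(t) = \frac{1}{|T| + \sum_{j\notin T} e^{\lambda(U^a_j - u^*)}} \to \frac{1}{|T|}.
\]
Since payoffs are bounded, the sum over $j$ passes to the limit term by term. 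This gives
\[
\lim_{\lambda\to\infty} U_d(\mathbf{x}^*;\lambda) = \frac{1}{|T|}\sum_{t\in T} D_t .
\]

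The booked value $\max_{t\in T} D_t$ is simply the strong-Stackelberg tie-breaking convention of Definition~\ref{def:sse} applied to $\mathbf{x}^*$. The inequality $\Delta(\mathbf{x}^*) \ge 0$ holds because a maximum dominates an average. Equality holds iff $\sum_{t\in T}\bigl(\max_s D_s - D_t\bigr) = 0$, a sum of nonnegative terms, so every $D_t$ must equal the maximum, meaning the tied targets pay the defender identically.

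I do not expect a real obstacle. The only care needed is to stress that the limit is taken along the fixed $\mathbf{x}^*$, in contrast to Theorem~\ref{thm:recovery}, where the optimal commitment moves with $\lambda$. The uniform $1/|T|$ weighting depends on the ties in $T$ being exact, which is why Lemma~\ref{lem:split} rather than Proposition~\ref{prop:ties} is the right tool.
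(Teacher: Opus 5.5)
Your proposal is correct and follows essentially the same route as the paper: the gap $\eta$ sends the probability mass outside $T$ to zero exponentially, Lemma~\ref{lem:split} gives the uniform split on $T$, and the fact that a maximum dominates an average yields $\Delta(\mathbf{x}^*)\ge 0$ together with the equality case. Your explicit normalization by $e^{\lambda u^*}$, which shows $\sigma_a(t)\to 1/|T|$, is a slightly more careful version of the paper's one-line ``the surviving mass splits uniformly'' step.
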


\begin{proof}
Targets outside $T$ have attacker utility below the maximum by some $\eta > 0$, so their combined probability is at most $(n-1) e^{-\lambda \eta}$ and vanishes as $\lambda \to \infty$. Within $T$ the utilities are exactly equal, so by Lemma~\ref{lem:split} the surviving mass splits uniformly at every $\lambda$. The limit and the gap follow.
\end{proof}

Two facts make the proposition consequential. The first is that an optimal SSE has many ties by construction rather than by accident. The multiple-LP formulation of SSE maximizes defender utility subject to the chosen target being weakly best for the attacker. At an optimum some of those weak inequalities commonly bind. In our games they bind emphatically; the v3 optimum ties five targets whose defender payoffs span 1.3 utility points, the v2 optimum ties nine spanning $-63.1$ to $-91.7$. $\Delta$ evaluates to about half a point in one game and roughly seventeen in the other, the two gaps Section~\ref{sec:results} reports. The second fact is that the leader cannot perturb its way out. The usual argument holds that the leader can shift its strategy infinitesimally to make the preferred response strictly optimal~\cite{conitzer2016stackelberg}, but Proposition~\ref{prop:ties} prices that argument against a quantal follower, since opening a gap the attacker requires a utility change of order $1/\lambda$, a real strategic move with real costs elsewhere in the game. The tie the defender most wants to win resists steering in a more basic way, since coverage is what makes a replica unattractive, and shifting probability toward one to make the attacker prefer it pushes the attacker away instead. Theorem~\ref{thm:recovery} and Proposition~\ref{prop:booked} together resolve what the experiments observe. The SSE value remains approachable, and the quantal optimum at large $\lambda$ attains it by arranging its own tied set to be equal in defender payoff. The SSE strategy is what fails, by exactly $\Delta$ in the limit, and Section~\ref{sec:misest} shows the realized value already sitting at that limit for every true rationality at or above one.

\subsection{Gradient-Based Solver}

Since $\mathbf{A}_d$ appears both directly in (\ref{eq:defender_utility}) and indirectly through $\boldsymbol{\sigma}_a$, the gradient carries two terms:
\begin{equation}
\nabla_{\mathbf{A}_d} U_d = R_d \boldsymbol{\sigma}_a + \lambda \left[ R_a (\mathbf{g} \odot \boldsymbol{\sigma}_a) - (R_a \boldsymbol{\sigma}_a) U_d \right]
\label{eq:gradient}
\end{equation}
where $\mathbf{g} = \mathbf{A}_d^\top R_d$ and $\odot$ denotes the element-wise product, the first term is a direct effect of shifting deception and the second an indirect effect through the attacker's response. We solve each restart with sequential least squares programming (SLSQP)~\cite{kraft1988slsqp}, which handles the simplex constraints directly and consumes the analytical gradient, as Algorithm~\ref{alg:qse} summarizes.

\begin{algorithm}[t]
\caption{QSE Solver (Multi-Start SLSQP)}
\label{alg:qse}
\small
\begin{algorithmic}[1]
\Require Reward matrices $R_d, R_a \in \mathbb{R}^{n \times n}$, rationality $\lambda > 0$, restarts $K$
\Ensure Defender strategy $\mathbf{A}_d^*$, utility $U_d^*$
\State $U_d^* \gets -\infty$
\For{$k = 1$ to $K$}
    \State Sample $\mathbf{x}_0 \sim \text{Dirichlet}(\mathbf{1}_n)$ \Comment{Random start on simplex}
    \State $\mathbf{x}_k \gets \text{SLSQP}(f = -U_d, \nabla f = -\nabla U_d, \mathbf{x}_0, \text{simplex constraints})$
    \State Compute $U_d(\mathbf{x}_k; \lambda)$ via (\ref{eq:softmax}), (\ref{eq:defender_utility}); \textbf{if} $U_d(\mathbf{x}_k; \lambda) > U_d^*$ \textbf{then} $\mathbf{A}_d^* \gets \mathbf{x}_k$, $U_d^* \gets U_d(\mathbf{x}_k; \lambda)$
\EndFor
\State \Return $\mathbf{A}_d^*, U_d^*$
\end{algorithmic}
\end{algorithm}

We use $K = 50$ restarts and verify the gradient against finite differences along simplex-feasible directions to relative error below $10^{-5}$.
Dedicated QSE algorithms exist and apply to our setting. \v{C}ern\'{y} et al.~\cite{cerny2020dinkelbach} rewrite the QSE objective as a fraction and use a Dinkelbach transformation to reduce the computation (for any general-sum normal-form game) to parametric subproblems solved by gradient or mixed-integer methods with explicit error bounds. This is a principled route when scale grows or guarantees are required. At $n = 16$ the direct approach above is a simpler sufficient alternative, since fifty restarts on the smooth objective already agree to within $10^{-2}$ in under a second.

\subsection{Complexity}
Each SLSQP iteration costs $O(n^2)$ for the matrix-vector products in (\ref{eq:softmax}), (\ref{eq:defender_utility}), and (\ref{eq:gradient}), so a full solve is $O(K n^2 T)$ with $T$ the per-restart iteration count, roughly $10^5$ products at $n = 16$ with $K = 50$, completing in under one second on the free Colab CPU tier. The SSE baseline is computed by the multiple-LP method of Conitzer and Sandholm~\cite{conitzer2006computing}: for each candidate attacker response $j$ we solve the defender's problem of maximizing $\mathbf{A}_d^\top R_d(\cdot, j)$ subject to $j$ being a best response, and keep the best of the $n$ solutions. This takes a few seconds at $n = 16$, and the full robustness matrix finishes in under a minute. Both methods are comfortably interactive at this scale, and a systematic scaling study is left to future work.

\section{Results and Analysis}
\label{sec:results}

\subsection{Experimental Setup}
\label{sec:setup}

Our experiments use the 8-node network from~\cite{sayed2026coordinated} and a 16-node extension. The 8-node network has five cyber vulnerabilities (Log4Shell, Ripple20, Heartbleed, SharePoint, Exchange) and three physical ones (a Siemens PLC, the VeraEdge sensor platform, and the Shekar endoscope camera). Log4Shell, a critical remote code execution flaw in the widely deployed Log4j logging library, and Ripple20, a family of flaws in an embedded TCP/IP stack on millions of Internet of Things devices, recur throughout the paper. The 16-node network adds redundant replicas of these systems, primary and backup servers, paired gateways, and twin controllers. To reflect real deployments the replicas are what create the near-tied target sets that Proposition~\ref{prop:ties} analyzes. Table~\ref{tab:cves} lists the CVEs.

\begin{table}[t]
\centering
\scriptsize
\caption{Real CVE vulnerabilities with CVSS v3 scores and each CVE's replica count in the 16-node network.}
\label{tab:cves}
\begin{tabular}{llcccc}
\toprule
CVE ID & Domain / System & Base Score & Impact & Exploit Prob. & Nodes \\
\midrule
CVE-2021-44228 & Cyber / Log4Shell      & 10.0 & 6.0 & 0.73 & 3 \\
CVE-2020-11898 & Cyber / Ripple20       & 9.1  & 5.2 & 0.73 & 2 \\
CVE-2014-0160  & Cyber / Heartbleed     & 7.5  & 3.6 & 0.73 & 2 \\
CVE-2020-16952 & Cyber / SharePoint     & 8.6  & 4.7 & 0.47 & 1 \\
CVE-2020-0688  & Cyber / Exchange       & 8.8  & 5.9 & 0.53 & 2 \\
CVE-2019-10915 & Physical / PLC         & 8.0  & 5.9 & 0.47 & 2 \\
CVE-2017-9389  & Physical / VeraEdge    & 8.8  & 5.9 & 0.53 & 2 \\
CVE-2017-10724 & Physical / Shekar endoscope & 8.8  & 5.9 & 0.53 & 2 \\
\bottomrule
\end{tabular}
\end{table}

Unless specified, $\text{cap} = 5$ and $\text{esc} = 10$. Costs vary by node role around the flat values of~\cite{sayed2026coordinated}, cyber nodes have $c_d \in [8, 15]$ and $c_a \in [2, 5]$ while physical nodes have $c_d \in [4, 7]$ and $c_a \in [8, 13]$, since physical decoys are cheaper to field when paired with cyber replicas while physical attacks remain costly.

To make the payoff structure concrete, consider the primary Log4Shell server, node 1, which under CVSS v3 has base score 10.0, impact score 6.0, exploit probability 0.73, and exploitability score 3.89. Equation (\ref{eq:valuations}) gives raw valuations $-0.73 \times 6.0 + 0.27 \times 5 = -3.03$ and $0.73 \times 10.0 - 0.27 \times 3.89 = 6.25$, which normalize across the network to $V_d(1) = -10$ and $V_a(1) = 10$ (the most valuable target for both players). With $c_d(1) = 15$, since the cross-domain replica that coordinates deception pairs with this server adds overhead, and $c_a(1) = 3$, a capture on node 1 pays $R_d(1,1) = 5 \times 10 - 15 + 3 = 38$, while an escape to the core Ripple20 gateway with $V_d(4) = v$ pays $R_d(1,4) = 10v - 12$. Escapes cost roughly three times what captures earn, which is why any sensible defender randomizes over several nodes The experiments bear this out, since the Greedy strategy that concentrates all deception on Log4Shell performs worse than uniform random.

We compare QSE against three baselines. URS places deception uniformly at random. Greedy places all deception on the single highest-value target. SSE is the strong Stackelberg equilibrium computed by the multiple-LP method above, validated by reproducing the defender utilities of~\cite{sayed2026coordinated} on the 8-node game exactly. For the convergence analysis we sweep $\lambda$ from 0.1 to 500. For the misestimation analysis we vary the defender's assumed value $\lambda^*$ and the attacker's true value $\lambda$ independently over $\{0.5, 1, 2, 5, 10, 50\}$, giving a $6 \times 6$ matrix of realized utilities. The parameter-sensitivity sweep rebuilds the reward matrices for every configuration with its own normalization, so utilities compare within a configuration only, and gaps there are reported relative to realized SSE utility.

\subsection{Baselines Across Four Games}

Table~\ref{tab:comparison} compares the four solution concepts for both network sizes and both CVSS versions. Two patterns hold everywhere. First, Greedy is the worst strategy in every game, worse even than uniform random, realizing $-106.53$ against URS's $-87.00$ on the 8-node CVSS v3 game, because a strategic attacker simply evades concentrated deception while the defender pays the full deployment cost. Second, SSE and QSE are close together when each is evaluated for its own model assumptions, with QSE at $\lambda=2$ trailing SSE by 0.4 to 2.4 utility points across the four games. That nominal gap is the premium QSE pays for hedging, and the robustness analysis below shows what the premium buys.

The strategy vectors explain where the advantage comes from. Under CVSS v3 the SSE and QSE allocations are nearly indistinguishable, both spread across the five Log4Shell and Ripple20 replicas and nothing else, with every weight within half a point. The advantage in that game comes entirely from how the weights interact with tie splitting, rather than from covering different nodes. Under CVSS v2 the compressed valuations make secondary targets competitive and both strategies spread widely--across nine nodes for SSE and eight for QSE. These include the Exchange servers and a physical VeraEdge sensor at roughly two percent showing quantitatively how cross-domain deception enters the portfolio exactly when no single domain dominates. We had expected QSE to spread strictly wider than SSE at moderate $\lambda$, but the supports largely coincide. The advantage flows through weight calibration and tie handling rather than extra coverage, and the spreading intuition survives only at very low rationality. QSE wins on pricing, not on placement.

\begin{table}[t]
\centering
\scriptsize
\caption{Defender utility across solution concepts with QSE at moderate $\lambda = 2$.}
\label{tab:comparison}
\begin{tabular}{lrrrr}
\toprule
 & \multicolumn{2}{c}{8-node} & \multicolumn{2}{c}{16-node} \\
\cmidrule(lr){2-3} \cmidrule(lr){4-5}
Method & CVSS v2 & CVSS v3 & CVSS v2 & CVSS v3 \\
\midrule
URS              & $-75.52$  & $-87.00$  & $-84.45$  & $-97.75$ \\
Greedy           & $-88.88$  & $-106.53$ & $-92.88$  & $-107.00$ \\
SSE              & $-52.10$  & $-48.94$  & $-63.09$  & $-76.92$ \\
QSE ($\lambda=2$) & $-54.23$  & $-49.88$  & $-65.51$  & $-77.28$ \\
\bottomrule
\end{tabular}
\end{table}

\subsection{Convergence to Stackelberg}

Figure~\ref{fig:lambda_sweep} traces QSE defender utility as $\lambda$ grows on the 8-node CVSS v3 game, confirming both predictions of Section~\ref{sec:method}. The curve converges to the SSE value, from $-49.88$ at $\lambda = 2$ to a gap of $0.02$ at $\lambda = 500$, and not monotonically, since at $\lambda = 0.1$ QSE realizes $-47.34$ against the SSE benchmark of $-48.94$ and the minimum of $-51.19$ is at $\lambda = 0.5$. Attack entropy falls from $1.85$ to under $0.01$ across the sweep, tracing the full stochastic-to-deterministic transition that a single Stackelberg point cannot represent.

\begin{figure}[t]
\centering
\begin{tikzpicture}
\begin{axis}[
    width=0.80\textwidth,
    height=4.7cm,
    xlabel={Attacker rationality $\lambda$},
    ylabel={Defender utility $U_d$},
    xmode=log,
    log basis x=10,
    xmin=0.08, xmax=650,
    ymin=-51.9, ymax=-46.7,
    ytick={-51,-50,-49,-48,-47},
    grid=major,
    grid style={gray!25},
    legend pos=south east,
    legend style={font=\footnotesize, row sep=-1pt, inner sep=2pt},
    label style={font=\small},
    tick label style={font=\small}
]
\addplot[color=teal, mark=*, mark size=1.4pt, thick] coordinates {
    (0.1, -47.34) (0.5, -51.19) (1, -50.55) (2, -49.88) (5, -49.48) (10, -49.28) (50, -49.04) (200, -48.97) (500, -48.96)
};
\addlegendentry{QSE}
\addplot[color=red, dashed, thick] coordinates {
    (0.08, -48.94) (650, -48.94)
};
\addlegendentry{SSE benchmark}
\end{axis}
\end{tikzpicture}
\caption{Defender utility against attacker rationality $\lambda$ on the 8-node CVSS v3 game, tracing the non-monotone path anticipated in Section~\ref{sec:method}. The axis is scaled to the QSE and SSE range, with uniform random far below at $-87.00$ (Table~\ref{tab:comparison}).}
\label{fig:lambda_sweep}
\end{figure}

\subsection{Robustness to Estimation Error}
\label{sec:misest}

A central question is what happens when the defender's assumption about $\lambda$ is wrong. We computed the QSE strategy for assumed $\lambda^*$ in $\{0.5, 1, 2, 5, 10, 50\}$ and evaluated every strategy, plus SSE, against attackers for every true $\lambda$ in the same grid for the 16-node CVSS v3 game. Figure~\ref{fig:heatmap} shows the realized advantage of QSE over SSE in each case.

Repeating the grid for CVSS v2 and on both 8-node games gives the same verdict in every one of the remaining 108 cells, so QSE wins all 144 comparisons. What changes is the volume. Where the v3 tie spans defender payoffs only $1.3$ points, the v2 tie binds the web server the convention books at $-63.1$ to tied alternatives worth $-85.5$ to $-91.7$, including a physical sensor node, so realized utility lands near $-80$, a booked-versus-realized gap of roughly $17$ points. The tie tax in each game tracks one quantity, the spread of defender payoffs inside the tie set the SSE optimizer itself creates.

\begin{figure}[t]
\centering
\let\tiny\scriptsize
\scalebox{0.88}{\begin{tikzpicture}[scale=1.0]
\node[font=\footnotesize] at (-1.35, 3.75) {$\lambda^*_{\text{def}} \rightarrow$};
\node[font=\footnotesize] at (0.0, 3.75) {0.5};
\node[font=\footnotesize] at (0.95, 3.75) {1};
\node[font=\footnotesize] at (1.9, 3.75) {2};
\node[font=\footnotesize] at (2.85, 3.75) {5};
\node[font=\footnotesize] at (3.8, 3.75) {10};
\node[font=\footnotesize] at (4.75, 3.75) {50};
\node[font=\footnotesize, rotate=90] at (-2.0, 1.55) {$\lambda_{\text{atk}}$ (true)};
\node[font=\footnotesize] at (-1.1, 3.10) {0.5};
\node[font=\footnotesize] at (-1.1, 2.48) {1};
\node[font=\footnotesize] at (-1.1, 1.86) {2};
\node[font=\footnotesize] at (-1.1, 1.24) {5};
\node[font=\footnotesize] at (-1.1, 0.62) {10};
\node[font=\footnotesize] at (-1.1, 0.00) {50};
\fill[green!22!white, draw=gray!60] (-0.45,2.83) rectangle (0.45,3.37); \node[font=\tiny] at (0.0,3.10) {+0.05};
\fill[green!22!white, draw=gray!60] (0.50,2.83) rectangle (1.40,3.37); \node[font=\tiny] at (0.95,3.10) {+0.05};
\fill[green!31!white, draw=gray!60] (1.45,2.83) rectangle (2.35,3.37); \node[font=\tiny] at (1.90,3.10) {+0.12};
\fill[green!41!white, draw=gray!60] (2.40,2.83) rectangle (3.30,3.37); \node[font=\tiny] at (2.85,3.10) {+0.19};
\fill[green!38!white, draw=gray!60] (3.35,2.83) rectangle (4.25,3.37); \node[font=\tiny] at (3.80,3.10) {+0.17};
\fill[green!33!white, draw=gray!60] (4.30,2.83) rectangle (5.20,3.37); \node[font=\tiny] at (4.75,3.10) {+0.13};
\fill[green!27!white, draw=gray!60] (-0.45,2.21) rectangle (0.45,2.75); \node[font=\tiny] at (0.0,2.48) {+0.09};
\fill[green!29!white, draw=gray!60] (0.50,2.21) rectangle (1.40,2.75); \node[font=\tiny] at (0.95,2.48) {+0.10};
\fill[green!38!white, draw=gray!60] (1.45,2.21) rectangle (2.35,2.75); \node[font=\tiny] at (1.90,2.48) {+0.17};
\fill[green!51!white, draw=gray!60] (2.40,2.21) rectangle (3.30,2.75); \node[font=\tiny] at (2.85,2.48) {+0.26};
\fill[green!52!white, draw=gray!60] (3.35,2.21) rectangle (4.25,2.75); \node[font=\tiny] at (3.80,2.48) {+0.27};
\fill[green!47!white, draw=gray!60] (4.30,2.21) rectangle (5.20,2.75); \node[font=\tiny] at (4.75,2.48) {+0.23};
\fill[green!27!white, draw=gray!60] (-0.45,1.59) rectangle (0.45,2.13); \node[font=\tiny] at (0.0,1.86) {+0.09};
\fill[green!29!white, draw=gray!60] (0.50,1.59) rectangle (1.40,2.13); \node[font=\tiny] at (0.95,1.86) {+0.10};
\fill[green!38!white, draw=gray!60] (1.45,1.59) rectangle (2.35,2.13); \node[font=\tiny] at (1.90,1.86) {+0.17};
\fill[green!52!white, draw=gray!60] (2.40,1.59) rectangle (3.30,2.13); \node[font=\tiny] at (2.85,1.86) {+0.27};
\fill[green!57!white, draw=gray!60] (3.35,1.59) rectangle (4.25,2.13); \node[font=\tiny] at (3.80,1.86) {+0.30};
\fill[green!55!white, draw=gray!60] (4.30,1.59) rectangle (5.20,2.13); \node[font=\tiny] at (4.75,1.86) {+0.29};
\fill[green!26!white, draw=gray!60] (-0.45,0.97) rectangle (0.45,1.51); \node[font=\tiny] at (0.0,1.24) {+0.08};
\fill[green!27!white, draw=gray!60] (0.50,0.97) rectangle (1.40,1.51); \node[font=\tiny] at (0.95,1.24) {+0.09};
\fill[green!37!white, draw=gray!60] (1.45,0.97) rectangle (2.35,1.51); \node[font=\tiny] at (1.90,1.24) {+0.16};
\fill[green!55!white, draw=gray!60] (2.40,0.97) rectangle (3.30,1.51); \node[font=\tiny] at (2.85,1.24) {+0.29};
\fill[green!64!white, draw=gray!60] (3.35,0.97) rectangle (4.25,1.51); \node[font=\tiny] at (3.80,1.24) {+0.35};
\fill[green!65!white, draw=gray!60] (4.30,0.97) rectangle (5.20,1.51); \node[font=\tiny] at (4.75,1.24) {+0.36};
\fill[green!26!white, draw=gray!60] (-0.45,0.35) rectangle (0.45,0.89); \node[font=\tiny] at (0.0,0.62) {+0.08};
\fill[green!24!white, draw=gray!60] (0.50,0.35) rectangle (1.40,0.89); \node[font=\tiny] at (0.95,0.62) {+0.07};
\fill[green!33!white, draw=gray!60] (1.45,0.35) rectangle (2.35,0.89); \node[font=\tiny] at (1.90,0.62) {+0.13};
\fill[green!52!white, draw=gray!60] (2.40,0.35) rectangle (3.30,0.89); \node[font=\tiny] at (2.85,0.62) {+0.27};
\fill[green!68!white, draw=gray!60] (3.35,0.35) rectangle (4.25,0.89); \node[font=\tiny] at (3.80,0.62) {+0.38};
\fill[green!73!white, draw=gray!60] (4.30,0.35) rectangle (5.20,0.89); \node[font=\tiny] at (4.75,0.62) {+0.42};
\fill[green!23!white, draw=gray!60] (-0.45,-0.27) rectangle (0.45,0.27); \node[font=\tiny] at (0.0,0.00) {+0.06};
\fill[green!19!white, draw=gray!60] (0.50,-0.27) rectangle (1.40,0.27); \node[font=\tiny] at (0.95,0.00) {+0.03};
\fill[green!23!white, draw=gray!60] (1.45,-0.27) rectangle (2.35,0.27); \node[font=\tiny] at (1.90,0.00) {+0.06};
\fill[green!33!white, draw=gray!60] (2.40,-0.27) rectangle (3.30,0.27); \node[font=\tiny] at (2.85,0.00) {+0.13};
\fill[green!48!white, draw=gray!60] (3.35,-0.27) rectangle (4.25,0.27); \node[font=\tiny] at (3.80,0.00) {+0.24};
\fill[green!83!white, draw=gray!60] (4.30,-0.27) rectangle (5.20,0.27); \node[font=\tiny] at (4.75,0.00) {+0.49};
\node[font=\footnotesize, anchor=west, draw, rounded corners, fill=yellow!10, align=left] at (5.6, 1.55) {QSE wins\\\textbf{36/36} cells\\mean $+0.19$};
\end{tikzpicture}}
\caption{Realized advantage of QSE over SSE on the 16-node CVSS v3 game across assumed rationality $\lambda^*$ and true rationality $\lambda$. All 36 cells are positive, from $+0.03$ to $+0.49$ with mean $+0.19$, growing toward high rationality where SSE's reliance on favorable tie-breaking costs it most.}
\label{fig:heatmap}
\end{figure}

Every cell is positive. The advantages are modest in absolute terms, $+0.03$ to $+0.49$ utility points, but they are systematic and they are exactly what Proposition~\ref{prop:ties} predicts. The SSE strategy's model value for this game is $-76.92$, computed under the convention that the attacker breaks ties in the defender's favor. Against a quantal attacker at any rationality level, SSE instead realizes between $-77.41$ and $-77.45$, a consistent loss of roughly half a utility point. The source of the loss is visible in the strategy itself. The SSE optimizer does what classical security-game solutions do: it equalizes the attacker's utility across its attack set, leaving the three Log4Shell replicas and both Ripple20 gateways tied exactly. The convention then books the value of a favorable member, $-76.92$. But the tie is five wide, and by Proposition~\ref{prop:ties} a quantal attacker splits it evenly at every $\lambda$, putting an empirical $20\%$ on each of the five at $\lambda = 2$, including the two Ripple20 gateways worth $-78.2$ to the defender. SSE booked the favorable resolution as value. QSE priced the split during optimization and never owed the difference. The tie tax alone accounts for every cell of Figure~\ref{fig:heatmap}.

This reframes the misestimation question. The defender does not need an accurate estimate of $\lambda$ for QSE to pay off, because the dominant failure mode of SSE, optimism about tie resolution, is independent of which $\lambda$ the defender assumes. Even the worst example clears SSE by $+0.03$.

\subsection{Robustness to Attacker Model Misspecification}
\label{sec:offmodel}
 
The misestimation grid stresses one assumption while protecting another, since both the assumed and the true attacker stay inside the logit family. The harder question, the one the insurance framing rides on, is what happens when the attacker's decision rule is not logit at all. A defender who tunes a single $\lambda$ has committed to a particular shape of irrationality. If that shape is wrong, does the strategy still improve?
 
We fix the QSE strategy at a single operating point, $\mathbf{A}_d^{\text{QSE}}(\lambda=2)$, never retuning it, and measure its realized utility against a battery of attackers that are not quantal. Scored uses one consistent metric, the realized defender utility of a fixed strategy against a fixed attacker response,
\begin{equation}
U_d^{\mathrm{real}}(\mathbf{A}_d; \boldsymbol{\sigma}) = \mathbf{A}_d^\top R_d\, \boldsymbol{\sigma},
\label{eq:realized}
\end{equation}
so that QSE, SSE, and the baselines are scored on the same footing. Here the alternative model is the ground truth.
 
Each attacker is a response map $\boldsymbol{\sigma}(\mathbf{A}_d)$ that we substitute into (\ref{eq:realized}). Beyond the logit family of (\ref{eq:softmax}), we test four structurally different rules. The $\varepsilon$-greedy attacker plays a perfect best response with probability $1-\varepsilon$ and uniformly at random otherwise,
\begin{equation}
\sigma^{\varepsilon}_a(j) = (1-\varepsilon)\,\mathbf{1}\!\left[j = \arg\max_k U^a_k\right] + \frac{\varepsilon}{n}.
\end{equation}
The satisficing attacker is indifferent across a $\delta$-optimal attack set $\mathcal{B}_\delta = \{\, j : U^a_j \ge \max_k U^a_k - \delta \,\}$, which it resolves either uniformly, $\sigma^{\mathrm{sat}}_a(j) = \mathbf{1}[j \in \mathcal{B}_\delta]/|\mathcal{B}_\delta|$, or adversarially against the defender,
\begin{equation}
\sigma^{\mathrm{adv}}_a = \mathbf{e}_{j^\star}, \qquad j^\star = \arg\min_{j \in \mathcal{B}_\delta} \mathbf{A}_d^\top R_d(\cdot, j).
\end{equation}
The adversarial variant is the pessimistic tie resolution that robust methods such as MATCH~\cite{pita2012robust} are built to withstand. The Gaussian, or probit, attacker draws additive noise of a different family,
\begin{equation}
\sigma^{\mathcal{N}}_a(j) = \Pr\!\left[ U^a_j + \eta_j \ge U^a_k + \eta_k \ \forall k \right], \qquad \boldsymbol{\eta} \sim \mathcal{N}(0, s^2 I),
\end{equation}
where we set $s = \pi / (\lambda\sqrt{6})$, the standard deviation of the Gumbel noise implicit in the logit response at rationality $\lambda$, so that logit and probit differ only in the \emph{shape} of the error and not its magnitude. Finally the level-1 attacker takes a single step of reasoning, best responding to a uniform defender, $\sigma^{L1}_a = \mathbf{e}_{j_0}$ with $j_0 = \arg\max_j (\bar{\mathbf{u}}^\top R_a)_j$ and $\bar{\mathbf{u}} = \mathbf{1}/n$.
 
Table~\ref{tab:offmodel} reports realized utilities for the 16-node CVSS v3 game and Figure~\ref{fig:offmodel} plots the full battery. The single QSE strategy, tuned only to logit at $\lambda=2$, strictly beats SSE against the entire logit family, every probit attacker, and both satisficers, twelve of the eighteen attackers in all, the remainder being four narrow losses and two effective ties at the uniform and lowest-rationality limits. The probit columns carry the reframing: a defender who assumes logit noise still outperforms perfect-rationality planning when the truth is Gaussian noise, so the value of QSE does not depend on the logit model being literally correct, only on the error being smooth and spread. The adversarial column is the worst-case tie resolution, and QSE beating it by as much as $+1.08$ makes a concrete argument for the robustness claim that MATCH-style methods make abstractly.
 
The strategy loses only to the two attackers that place a hard point mass on the single best response, $\varepsilon$-greedy and level-1, and only by $0.12$ to $0.24$, which is less a failure of the model than the boundary of its advantage: against a clost ot perfect best response consistent with SSE, QSE's hedging is mildly wasteful.
 
\begin{table}[t]
\centering
\scriptsize
\caption{Realized utility (\ref{eq:realized}) of a fixed QSE strategy ($\lambda=2$) against true attackers on the 16-node CVSS v3 game, with $\Delta$ computed from full-precision values. This section uses a different node-cost calibration than Table~\ref{tab:comparison}, and SSE books $-77.55$ here under favorable tie-breaking.}
\label{tab:offmodel}
\begin{tabular}{llrrrr}
\toprule
Family & Parameter & QSE $\lambda{=}2$ & SSE & URS & $\Delta$ \\
\midrule
\multirow{3}{*}{Quantal (logit)} & $\lambda=0.5$ & $-78.00$ & $-78.04$ & $-96.34$ & $+0.04$ \\
 & $\lambda=2$  & $-77.91$ & $-78.08$ & $-97.03$ & $+0.16$ \\
 & $\lambda=50$ & $-77.78$ & $-78.08$ & $-97.19$ & $+0.30$ \\
\midrule
\multirow{3}{*}{$\varepsilon$-greedy} & $\varepsilon=0.10$ & $-75.27$ & $-75.05$ & $-92.86$ & $-0.22$ \\
 & $\varepsilon=0.25$ & $-71.49$ & $-71.31$ & $-86.38$ & $-0.18$ \\
 & $\varepsilon=0.50$ & $-65.19$ & $-65.07$ & $-75.56$ & $-0.12$ \\
\midrule
\multirow{2}{*}{Satisficing} & uniform $\delta{=}0.5$ & $-77.77$ & $-78.08$ & $-97.19$ & $+0.30$ \\
 & adversarial $\delta{=}0.5$ & $-77.79$ & $-78.87$ & $-97.19$ & $+1.08$ \\
\midrule
\multirow{3}{*}{Probit (Gaussian)} & $\lambda=1$ & $-77.97$ & $-78.08$ & $-96.74$ & $+0.11$ \\
 & $\lambda=2$ & $-77.89$ & $-78.08$ & $-97.04$ & $+0.19$ \\
 & $\lambda=5$ & $-77.79$ & $-78.07$ & $-97.18$ & $+0.29$ \\
\midrule
Level-1 & --- & $-77.79$ & $-77.55$ & $-97.19$ & $-0.24$ \\
Uniform & --- & $-52.58$ & $-52.58$ & $-53.94$ & $-0.00$ \\
\bottomrule
\end{tabular}
\end{table}
 
\begin{figure}[t]
\centering
\includegraphics[width=\linewidth,height=4.1cm]{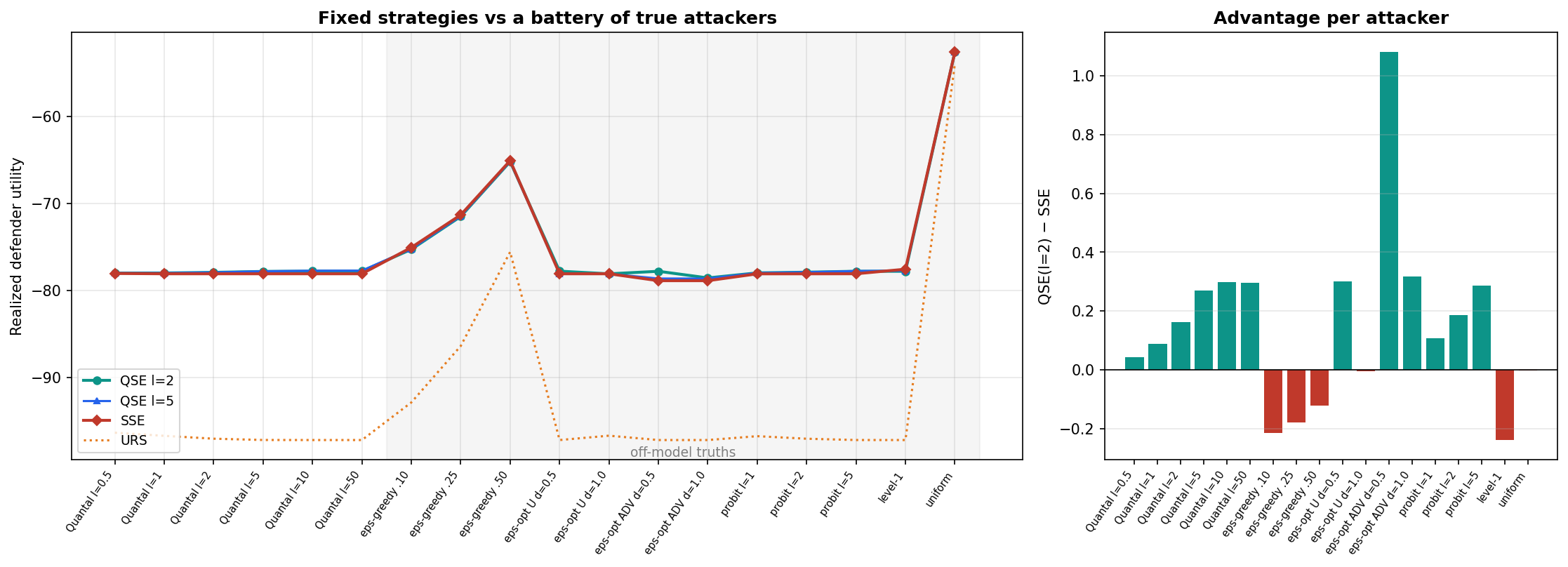}
\caption{Fixed strategies against the off-model attacker battery (16-node, CVSS v3), with the shaded region marking attackers outside the logit family and, right, the per-attacker advantage of QSE($\lambda=2$) over SSE.}
\label{fig:offmodel}
\end{figure}
 
The split traces to a structural fact already noted in Section~\ref{sec:misest}: the QSE and SSE strategies have nearly the same support. Since they deploy almost the same nodes, no attacker can separate them by much, which is why the realized advantage is small and, more importantly, why it is unconditional rather than tied to any property of the attacker. The mechanism is visible in SSE's realized utility, which among the tie-sensitive attackers, logit, probit, satisficing, and level-1, ranges from $-77.55$ to $-78.87$ in this section's calibration. SSE realizes its booked value of $-77.55$ exactly when the attacker cooperates with the favorable tie-breaking convention, as the level-1 attacker does by committing to the assumed best response. It realizes as little as $-78.87$ when the attacker resolves the tie adversarially. That spread is the tie tax of Section~\ref{sec:misest} measured directly, and QSE, which sits near $-77.8$ throughout, is precisely the insurance against it. The relationship is categorical rather than continuous in any scalar attacker property, separating attackers that concentrate on the single best response from those that distribute or mis-resolve ties.
 
To confirm this is not an artifact of one parameter setting, we rebuilt the game and reran the battery across eight configurations spanning the capture reward, the escape penalty, and an overall cost scale, with one representative attacker per mechanism. The sign pattern is invariant. The generalization and tie-break advantages stay positive in every configuration, growing with the capture reward to $+0.6$ against probit and $+2.9$ against the adversarial satisficer at cap $= 10$, the boundary losses to $\varepsilon$-greedy and level-1 stay mildly negative in every configuration, and the magnitudes scale exactly as the on-model sensitivity analysis of Section~\ref{sec:sensitivity} predicts.
 
The interpretation is the same as for the misestimation grid, extended one level. QSE does not merely tolerate a wrong estimate of the attacker's rationality; it tolerates a wrong model of the attacker's decision rule, across the smooth, satisficing, and adversarial families that describe a confused or noisy adversary, at the same near-zero premium. The single regime where perfect-rationality planning is preferable is the one it was designed for, a near-flawless best responder.

\subsection{Parameter Sensitivity}
\label{sec:sensitivity}

To check whether the robustness advantage is an artifact of one parameter configuration, we varied five parameters across five values each and re-solved the full game for all 25 configurations. For each configuration we computed the realized utilities of the SSE strategy and the QSE strategy, optimized at $\lambda^* = 2$, against attackers at $\lambda \in \{0.5, 2, 10, 50\}$, and we report the gap between mean realized utilities relative to SSE's, as described in Section~\ref{sec:method} summarized in Table~\ref{tab:sensitivity}.

\begin{table}[t]
\centering
\footnotesize
\caption{Parameter sensitivity on the 16-node game. The robustness gap is the improvement of mean realized QSE utility over mean realized SSE utility across attackers at $\lambda \in \{0.5, 2, 10, 50\}$. QSE wins all four checks in every configuration.}
\label{tab:sensitivity}
\begin{tabular}{lccc}
\toprule
Parameter varied & Values tested & QSE wins & Robustness gap \\
\midrule
Capture reward (cap)   & 2, 5, 10, 15, 20            & 5/5 & $+77\%$ to $+175\%$ \\
Escape penalty (esc)   & 3, 5, 10, 20, 30            & 5/5 & $+73\%$ to $+95\%$ \\
Cap:Esc ratio          & 2:13, 5:10, 7:8, 10:5, 13:2 & 5/5 & $+71\%$ to $+144\%$ \\
Cost scale             & $0.25\times$ to $4\times$    & 5/5 & $+84\%$ to $+88\%$ \\
Domain reward $(r_c, r_p)$ & $(0,0)$ to $(10,15)$    & 5/5 & $+46\%$ to $+85\%$ \\
\midrule
\textbf{Total} & \textbf{25 configurations} & \textbf{25/25} & $\mathbf{+46\%}$ \textbf{to} $\mathbf{+175\%}$ \\
\bottomrule
\end{tabular}
\end{table}

The robustness gap is positive in all 25 configurations, from $+46\%$ to $+175\%$, and QSE wins every individual rationality check. The gap grows monotonically with capture reward, from $+77\%$ at cap $= 2$ to $+175\%$ at cap $= 20$, since the more a capture is worth the more expensive it becomes to misprice which member of a tied set absorbs the attack. It stays between $+84\%$ and $+88\%$ across a sixteen-fold change in cost scale, so the advantage is structural rather than an artifact of calibration. The smallest gap occurs when the domain reward is zeroed out entirely, and even there QSE wins all four cases.

\subsection{Criticality and Calibration in Practice}

On the practical question of which vulnerabilities to prioritize, a leave-one-out criticality score (the change in defender utility when a node is removed) gives the same top five nodes in the same order under SSE and under QSE at $\lambda \in \{1, 5, 20\}$, the Log4Shell replicas and the two Ripple20 gateways, with criticality collapsing to near zero beyond them. This reassuring null result means the answer to which assets matter most does not depend on the rationality question, even though the optimal weights do.

The robustness analysis also removes most of the pressure to estimate $\lambda$ precisely, but a defender still has to pick a value. Three situations cover practice. If engagement data exists, $\lambda$ can be fit using maximum likelihood from observed attacker choices~\cite{yang2012computing}, and human-subject studies in security games~\cite{nguyen2013analyzing} typically find values between $0.5$ and $2$. If not, the threat profile is a reasonable guide: scanners and opportunists at low rationality and targeted operators at high. A defender who wants one number can take $\lambda^* \in [2, 10]$, the band where the realized advantage is consistently strong, with $\lambda^* = 5$ a sensible default. Even the worst choice in our grid still beats SSE, which is the main point. The cost of guessing wrong is small, but the cost of not modeling boundedness at all matters in every case.

A defender who distrusts any single estimate can optimize against the whole range instead, using the machinery already in hand. Let $\Lambda$ denote the grid $\{0.5, 1, 2, 5, 10, 50\}$ of the misestimation study. A Bayesian robust strategy maximizes the average of $U_d(\mathbf{A}_d; \lambda)$ under a uniform prior on $\Lambda$, a maximin strategy maximizes a smoothed minimum over the same grid, and both drop into Algorithm~\ref{alg:qse} unchanged since only the objective and its gradient differ. Table~\ref{tab:lambdaspace} reports how each commitment performs against every true rationality in $\Lambda$ on the 16-node CVSS v3 game.

\begin{table}[t]
\centering
\footnotesize
\caption{Realized defender utility against the true rationality for four commitments on the 16-node CVSS v3 game. Bayes averages the QSE objective uniformly over $\Lambda$, maximin optimizes a smoothed worst case, both via Algorithm~\ref{alg:qse}.}
\label{tab:lambdaspace}
\begin{tabular}{lcccccc}
\toprule
Commitment & $\lambda{=}0.5$ & $\lambda{=}2$ & $\lambda{=}10$ & $\lambda{=}50$ & Worst & Mean \\
\midrule
SSE                  & $-77.41$ & $-77.45$ & $-77.45$ & $-77.45$ & $-77.45$ & $-77.44$ \\
QSE ($\lambda^*{=}2$) & $-77.36$ & $-77.28$ & $-77.15$ & $-77.16$ & $-77.36$ & $-77.25$ \\
Bayes over $\Lambda$  & $-77.37$ & $-77.30$ & $-77.08$ & $-77.06$ & $-77.37$ & $-77.22$ \\
Maximin over $\Lambda$ & $-77.37$ & $-77.30$ & $-77.08$ & $-77.06$ & $-77.37$ & $-77.22$ \\
\bottomrule
\end{tabular}
\end{table}

The two robust strategies come out essentially identical They match the fixed $\lambda^* = 2$ commitment at the low end to within a few hundredths, and they improve on it by about a tenth of a point against the most rational attackers. The fixed moderate estimate therefore already captures most of the robustness available in this game, and the explicit treatment of $\Lambda$ earns its modest extra solve time when the plausible range is wider than ours or the tie structure richer. SSE is dominated by all three quantal commitments at every rationality in the grid.

\section{Discussion}
\label{sec:discussion}

Our theoretical and empirical analysis point to a common underlying structure of Stackelberg Security Games that has significant implications for understanding robustness to uncertainty and bounded rationality. The key feature is that estimation errors affect what happens among the set of tied targets in the solution, and only errors that affect the tie-breaking behavior here will significantly affect the final outcome of the game. The discontinuous maximization introduced in SSE makes that concept fundamentally problematic, but the smoothness of QSE allows finding solutions that only vary subtly from the SSE solutions in terms of allocating resources, but which make a dramatic difference in robustness. We can quantify this gap in our analysis, and it shows up consistently in the empirical findings for a realistic case study as well. 

The ties are not an artifact of our games, but rather a fundamental feature of the solutions found by SSE (a fact exploited in some specialized solution techniques~\cite{kiekintveld2013inteveral}). In real applications this is exacerbated by the fact that many real assets are close to indistinguishable in terms of value (e.g., multiple identical routers or cloned workstations). 
Section~\ref{sec:offmodel} shows how little this depends on the attacker's exact form. A defender who assumes logit noise still beats SSE when the truth is Gaussian, and QSE keeps its edge even against the adversarial tie resolution that robust methods such as MATCH~\cite{pita2012robust} are built to withstand. It is also worth noting that the quantal response model has a long history and substantial empirical support within the psychology and choice literature as a coarse approximation of choice distributions in humans, so it is likely to be a good approximation for realistic human errors in real strategic games as well. 
 
Our claims have some limitations. The game is one-shot and we do not model attacker learning. The logit response assumes errors independent across targets,and both networks derive from the same eight CVEs. The rationality values are borrowed from human-subject studies in physical security, and the absolute margins are modest. The main finding is the consistency across all 144 misestimation cases and 25 configurations, not the magnitude.
 
\section{Conclusion and Future Work}
\label{sec:conclusion}

Motivated by the observation that model uncertainty and challenges in defining exact behaviors for bounded opponents (including humans) are core challenges when deploying solutions like Stackelberg Security Games in realistic settings, we investigated how well the concept of Quantal Response can address these concerns. We replaced the perfect best response in coordinated cyber-physical deception to define QSE, and presented both a formal analysis and empirical results from a realistic case study that demonstrate that this simple change has a significant impact on the overall robustness of the solutions that is consistent across a wide variety of different manipulations. We also show that this change allows for simple gradient-based solution techniques that do not require integer programming but are fast and reliable. This builds on earlier work on robustness and tie-breaking rules, and deepens the understanding of the reasons for this result in SSG, as well as the practical implications and value of adopting this type of smoothed response function. One of the most promising aspects of this result is that the cost of added robustness is minimal, both in terms of the insurance premium and the computational cost. The gain in robustness is also durable, in that it applies across many different forms of errors and does not require precise identification or estimation of what may be causing deviations. 

For future work, nested logit is an immediate step, since replicas are naturally nested and their correlation is exactly what independence of irrelevant alternatives misses. Estimating $\lambda$ from real engagement data would ground the parameter in the cyber domain rather than borrowed physical-security behavior. Exploring wider bands and distributional priors that extend the robust variants, and sequential extensions with growing attacker rationality would be valuable, as would scaling studies on larger networks and testing results across more variations in the game assumptions and opponent models.

\begin{credits}
\subsubsection{\ackname}
Research was sponsored by the DEVCOM Army Research Laboratory and was accomplished under Cooperative Agreement Number W911NF-23-2-0012. The views and conclusions contained in this document are those of the authors and should not be interpreted as representing the official policies, either expressed or implied, of the Army Research Laboratory or the U.S. Government. The U.S. Government is authorized to reproduce and distribute reprints for Government purposes, notwithstanding any copyright notation herein.
\end{credits}


\end{document}